\newtheorem{Proposition}{Proposition}
\newtheorem{lemma}{Lemma}
\newtheorem{corollary}{Corollary}
\def\blfootnote{\xdef\@thefnmark{}\@footnotetext}
\begin{document}
	
\title{\LARGE Performance Analysis of Wireless Communication Systems Assisted by Fluid Reconfigurable Intelligent Surfaces} 
\author{Farshad~Rostami~Ghadi,~\IEEEmembership{Member},~\textit{IEEE}, 
            Kai-Kit~Wong,~\IEEEmembership{Fellow},~\textit{IEEE}, 
            F. Javier~L\'opez-Mart\'inez,~\IEEEmembership{Senior~Member}, 
            George C. Alexandropoulos,~\IEEEmembership{Senior~Member,~IEEE}, and
            Chan-Byoung~Chae,~\IEEEmembership{Fellow},~\textit{IEEE}
            \vspace{-8mm}
	   }
\maketitle

\begin{abstract}
This letter investigates the performance of emerging wireless communication systems assisted by a fluid reconfigurable intelligent surface (FRIS). Unlike conventional reconfigurable intelligent surfaces (RISs), an FRIS consists of fluid-inspired metamaterials arranged in a densely packed matrix of sub-elements over a surface. It dynamically activates specific elements for signal reflection and modulation based on real-time channel conditions. Considering a downlink scenario where a base station communicates with a user terminal via a FRIS, we first characterize the statistical behavior of the equivalent end-to-end channel by deriving closed-form approximations for its cumulative distribution and probability density functions. Using these expressions, an analytical approximation for the outage probability and a tight upper bound on the ergodic capacity, including their asymptotic behaviors for high signal-to-noise ratio values, are derived. Our findings reveal key performance trends demonstrating that FRIS can substantially improve link reliability and spectral efficiency compared to conventional RISs, owing to its capability to dynamically select optimal elements from a dense preconfigured grid.
\end{abstract}

\begin{IEEEkeywords}
Fluid reconfigurable intelligent surface (FRIS), moment-matching, ergodic capacity, outage probability.
\end{IEEEkeywords}
\maketitle

\blfootnote{The work of F. Rostami Ghadi is supported by the European Union's Horizon 2022 Research and Innovation Programme under Marie Sk\l odowska-Curie Grant No. 101107993. The work of K. K. Wong is supported by the Engineering and Physical Sciences Research Council (EPSRC) under Grant EP/W026813/1. The work of F. J. L\'opez-Mart\'inez is funded by MICIU/AEI/10.13039/50110001103 and FEDER/UE through grant PID2023-149975OB-I00 (COSTUME). The work of C.-B. Chae is supported by by IITP (2025-RS-2024-00428780,
2022R1A5A1027646).} 
\blfootnote{\noindent F. Rostami Ghadi and F. J. L\'opez-Mart\'inez are with the Department of Signal Theory, Networking and Communications, Research Centre for Information and Communication Technologies (CITIC-UGR), University of Granada, 18071, Granada, Spain (e-mail: $\rm {f.rostami, fjlm}@ugr.es)$.}
\blfootnote{\noindent K. K. Wong is affiliated with the Department of Electronic and Electrical Engineering, University College London, Torrington Place, WC1E 7JE, U.K. and he is also affiliated with Yonsei Frontier Lab, Yonsei University, Seoul, Korea (e-mail: $\rm kai\text{-}kit.wong@ucl.ac.uk$).}
\blfootnote{G. C. Alexandropoulos is with the Department of Informatics and Telecommunications, National and Kapodistrian University of Athens, 16122 Athens, Greece and the Department of Electrical and Computer Engineering, University of Illinois Chicago, IL 60601, USA (e-mail: $\rm alexandg@di.uoa.gr$).}
\blfootnote{C.-B. Chae is with the School of Integrated Technology, Yonsei University, Seoul, 03722 South Korea (e-mail: $\rm cbchae@yonsei.ac.kr$).}

\blfootnote{Corresponding Author: Kai-Kit Wong.}

\vspace{-2mm}
\section{Introduction}\label{sec-intro}
\IEEEPARstart{R}{ecently}, reconfigurable intelligent surface (RIS) technologies have emerged as an enabler for next generation wireless communications by intelligently controlling the reflection of radio signals~\cite{Basar-2024}. Through a large array of passive reflecting elements, RISs can improve coverage, energy efficiency, and spectrum usage without requiring active transmission \cite{Alexandropoulos-2021}. Nonetheless, practical implementations of RISs face several challenges~\cite{wei2021channel}, such as complex channel estimation as the number of elements grows, and their passive nature which is subject to the multiplicative pathloss effect, often resulting in severe signal attenuation. Additionally, current RIS designs are geometrically rigid, this eliminates any possibility of fully utilizing the spatial diversity when physical space in some cases is abundant.

To empower RISs for greater spatial adaptivity, it is possible to leverage the fluid antenna system (FAS) concept, proposed by Wong {\em et al.}~\cite{wong2021fluid} for shape and position reconfigurability in antennas. In recent years, FAS has been shown to be effective for many applications \cite{new2024tut}. The synergy between RIS and FAS has been also recently addressed in \cite{ghadi2024on,ghadi2025secrecy}. Inspired by this concept, the notion of a fluid RIS (FRIS) has been introduced, which proposes that the RIS elements can also dynamically reconfigure their spatial positions within a predefined area \cite{salem2025first,xiao2025fluid}. This physical adaptivity enables FRIS to obtain more spatial diversity and enhance link robustness in dynamic wireless environments. Moreover, a fluid integrated reflecting and emitting surface (FIRES) has also emerged expanding FRIS application for full $360^\circ$ coverage \cite{ghadi2025fires}. Despite these early results, however, the performance analysis and optimization of FRIS-based systems is still in its infancy. 

Motivated by this gap, our aim in this letter is to provide a performance analysis of FRIS-based systems. Specifically, focusing on the optimized FRIS system proposed in \cite{xiao2025fluid}, we derive closed-form expressions for the probability density function (PDF) and cumulative distribution function (CDF) of the equivalent channel gain by using moment-matching techniques. We also analyze the outage probability, ergodic capacity, and the corresponding asymptotic behaviors in the high-signal-to-noise ratio (SNR) regime, providing valuable insights into the reliability and spectral efficiency of FRIS systems. Our numerical results highlight significant performance trends, demonstrating that FRIS offers a substantial improvement in link reliability and spectral efficiency over conventional RISs. Notably, the ability of FRIS to dynamically select optimal elements from a dense preset grid provides enhanced adaptability, enabling better exploitation of spatial diversity, thus offering more robust communications in dynamically varying environmental conditions.

\begin{figure}[!t]
\centering
\includegraphics[width=1.0\columnwidth]{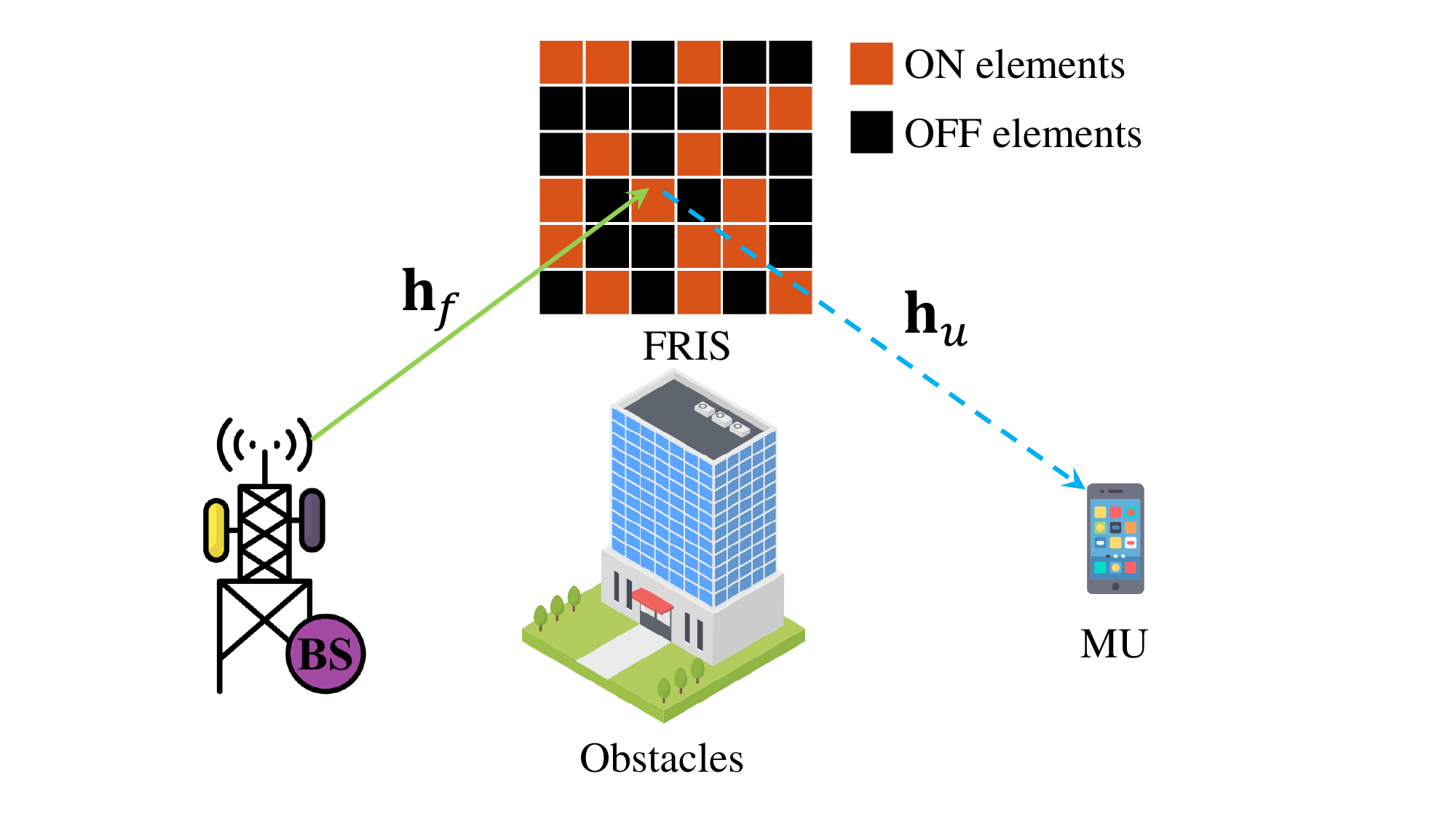}\vspace{-3mm}
\caption{The considered FRIS-assisted wireless communication system with the FRIS comprising ON-OFF unit elements.}\label{fig_model}
\vspace{-4mm}
\end{figure}

\vspace{-2mm}
\section{System  Model}\label{sec-sys}
We consider a wireless communications setup, as illustrated in Fig.~\ref{fig_model}, in which a base station (BS) with a fixed-position antenna (FPA) aims to send information to a FPA mobile user (MU) with the help of an FRIS. It is assumed that the direct link between the BS and MU is blocked due to obstacles. The FRIS includes $M=M_x\times M_z$ reflecting elements, operating in one of two discrete modes ON or OFF. When set to the ON mode, the element engages with the incoming wave, modifying its electromagnetic behavior to support the desired system functionality. In the OFF mode, the element is terminated with a matched load, effectively isolating it from the incident wave and preventing any signal alteration. The FRIS elements are uniformly distributed over a surface with size of $W=W_x\lambda\times W_z\lambda$, where $\lambda$ denotes the carrier wavelength. Specifically, $M_x$ and $W_x$ are the number of elements in each row with total length of $W_x\lambda$ and $M_z$, and $W_z$ are the number of elements in each column with total length of $W_x\lambda$. Very close placement of adjacent elements is assumed and, as a consequence, spatial correlation must be taken into account. To this end, we deploy the Jakes' model to describe the correlation coefficient between any two arbitrary FRIS elements $i$ and $j$ under rich scattering conditions: $j_{i,j} = \mathcal{J}_0\left(\frac{2\pi d_{i,j}}{\lambda}\right)$, where $\mathcal{J}_0\left(\cdot\right)$ denotes the zero-order spherical Bessel function of the first kind and $d_{ij}$\footnote{The derivation of squared distances between FRIS elements depends on how the two-dimensional (2D) indices are mapped from the one-dimensional (1D) element index. If a row-major (i.e., row-wise) indexing is used, coordinates are typically extracted using $x = \text{mod}(i-1, M_x)$ and $z = \lfloor (i-1) / M_x \rfloor$. Alternatively, if column-major (column-wise) indexing is used, the expressions become: $x = \lfloor (i-1) / M_z \rfloor$ and $z = \text{mod}(i-1, M_z)$. As long as the coordinate extraction is consistent with the chosen indexing scheme, the resulting spatial distances, thus the spatial correlation matrix, remain invariant.} is the physical distance between elements $i$ and $j$, given by
\begin{multline}
d_{i,j} = d_x^2\left(\mathrm{mod}\left(i,M_x\right)-\mathrm{mod}\left(j,M_x\right)\right)^2\\
+d_z^2\left(\left\lfloor\frac{i}{M_x}\right\rfloor-\left\lfloor\frac{j}{M_x}\right\rfloor \right),
\end{multline}
where $d_x=\frac{W_x\lambda}{M_x}$ and $d_z=\frac{W_z\lambda}{M_z}$ are the physical inter-element spacing in each FRIS row and column, respectively. Thus, the spatial correlation matrix is formulated as follows:
\begin{align}
\mathbf{J} = \begin{bmatrix}
	j_{1,1} & j_{1,2} & \cdots & j_{1,M} \\
	j_{2,1} & j_{2,2} & \cdots & j_{2,M} \\
	\vdots & \vdots & \ddots & \vdots \\
	j_{M,1} & j_{M,2} & \cdots & j_{M,M} \\
\end{bmatrix}.
\end{align}

Assuming that $M_o$ elements will be turned ON to modulate the incident signal, the received signal at the MU can be mathematically expressed as follows:
\begin{align}
y = \sqrt{PL_f L_u}\underset{H_\mathrm{eq}}{\underbrace{\mathbf{h}_{u}^H\mathbf{J}^{\frac{1}{2}}\mathbf{S}^T_{M_{o}}\mathbf{\Phi}\mathbf{S}_{M_{o}}\mathbf{J}^{\frac{1}{2}}\mathbf{h}_{f}}}x+z,\label{eq-y}
\end{align}
where $L_k=\sqrt{\rho d_k^{-\alpha}}$, for $k\in\left\{f,u\right\}$, represents the large-scale path-loss in which $\rho$ is the reference gain at one meter, $\alpha$ is the path-loss exponent, and $d_k$ represents the distances of the BS-to-FRIS and FRIS-to-MU links. Furthermore, $P$ is the power transmitted by the BS, $x$ is the transmitted information signal, and $z$ is the additional white Gaussian noise (AWGN) with zero mean and variance $\sigma^2$, i.e., $z\sim\mathcal{CN}\left(0,\sigma^2\right)$. The term $H_\mathrm{eq}$ is the equivalent fading channel in which the  vector $\mathbf{h}_k\in\mathbb{C}^{M\times 1}$ includes entries that are independent and identically distributed (i.i.d.) circularly symmetric complex Gaussian random variables (RVs) with zero mean and unit variance, i.e., $\mathbf{h}_k\sim\mathcal{CN}\left(0,\mathbf{I}_M\right)$,  representing the small-scale fading for the BS-to-FRIS or FRIS-to-MU links. The diagonal matrix $\mathbf{\Phi}=\mathrm{diag}\left(\left[\mathrm{e}^{j\phi_1}, \dots, \mathrm{e}^{j\phi_M}\right]\right)\in\mathbb{C}^{M\times M}$ contains the adjustable phases of the reflecting elements of the FRIS and $\mathbf{S}^T_{M_o}=\left[\mathbf{s}_1,\dots,\mathbf{s}_{M_o}\right]\in\mathbb{R}^{M\times M_o}$ is the element selection matrix, in which $\mathbf{s}_{m_o}$, for $m_o\in\mathcal{M}_o=\left\{1,\dots,M_o\right\}$, indicates one of the columns of the $M\times M$ identity matrix. 

Note that in fluid-inspired RIS systems, only a subset of the available meta-elements is actively utilized for reflection at any given time, based on instantaneous channel conditions or optimization objectives. The selection matrix $\mathbf{S}_{M_o}$ captures this sparse activation by projecting the full FRIS aperture of size $M$ onto a subspace of $M_o$ activated (i.e., ON) elements. This construction inherently gives $\mathbf{S}_{M_o}$ the structure of a binary orthogonal projection matrix, satisfying $\mathbf{S}_{M_o}^T\mathbf{S}_{M_o}=\mathbf{I}_{M_o}$, and accurately models the dynamic and selective behavior of FRIS, while enabling tractable analytical formulations. It is noteworthy to mention that the phase shifts in $\mathbf{\Phi}$ are assumed ideal and fully controllable to coherently align the reflected signals. The selection matrix $\mathbf{S}_{M_o}$ models the dynamic activation of the $M_o$ elements that maximize the equivalent channel gain, capturing the fluid antenna concept and indicating an optimized system configuration.

\vspace{-2mm}
\section{Performance Analysis}
In this section, we first derive the CDF and PDF of the equivalent channel gain at the MU, which are then used for obtaining novel closed-from approximations of the outage probability and ergodic capacity.

\subsection{Statistical Characterization}
Given \eqref{eq-y}, the received SNR at the MU is defined as 
\begin{align}
\gamma=\overline{\gamma}L_fL_u\left|H_\mathrm{eq}\right|^2,\label{eq-snr}
\end{align}
where $\overline{\gamma}\triangleq P/\sigma^2$ is the transmit SNR and $G\triangleq\left|H_\mathrm{eq}\right|^2= \left|\mathbf{h}_{u}^H\mathbf{J}^{\frac{1}{2}}\mathbf{S}^T_{M_{o}}\mathbf{\Phi}\mathbf{S}_{M_{o}}\mathbf{J}^{\frac{1}{2}}\mathbf{h}_{f}\right|^2$ represents the equivalent channel power gain. It is evident that $G$ involves the sum of products of several RVs, and its distribution is affected by the channel $\mathbf{h}_k$ and the correlation matrix $\mathbf{J}$. Also, the selection and reflection matrices $\mathrm{S}_{M_o}$ do not fundamentally affect the statistical distribution of the power gain, only as a scaling factor.

\begin{lemma}
The equivalent channel gain $G=\left|H_\mathrm{eq}\right|^2$ follows a generalized chi-squared distribution.
\end{lemma}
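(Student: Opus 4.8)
The plan is to peel off the deterministic factors and expose the residual randomness as a quadratic form in Gaussian vectors. First I would collapse the fixed matrices into a single effective matrix $\mathbf{B}\triangleq\mathbf{J}^{\frac{1}{2}}\mathbf{S}^T_{M_o}\mathbf{\Phi}\mathbf{S}_{M_o}\mathbf{J}^{\frac{1}{2}}$, so that $G=\left|\mathbf{h}_u^H\mathbf{B}\mathbf{h}_f\right|^2$ is the squared modulus of a bilinear form in the two independent vectors $\mathbf{h}_u,\mathbf{h}_f\sim\mathcal{CN}(0,\mathbf{I}_M)$. Since $\mathbf{\Phi}$ is unitary on the activated subspace and $\mathbf{S}_{M_o}$ is an orthonormal selection with $\mathbf{S}_{M_o}\mathbf{S}_{M_o}^T=\mathbf{I}_{M_o}$, all the distributional content of $G$ is carried by $\mathbf{B}$, whose rank is at most $M_o$ and whose nonzero singular values are governed by the correlation matrix $\mathbf{J}$ restricted to the ON elements; this makes precise the earlier remark that the selection and reflection matrices act only as a scaling of the underlying statistics.

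Next I would exploit the rotational invariance of the standard complex Gaussian law. Writing the singular value decomposition $\mathbf{B}=\mathbf{U}\mathbf{\Sigma}\mathbf{V}^H$ and using that $\mathbf{U}^H\mathbf{h}_u$ and $\mathbf{V}^H\mathbf{h}_f$ remain i.i.d.\ $\mathcal{CN}(0,\mathbf{I}_M)$, the channel reduces to the diagonalized form $H_\mathrm{eq}=\sum_{k}\sigma_k\,\tilde u_k^{*}\,\tilde f_k$, where $\sigma_k$ are the singular values of $\mathbf{B}$ and $\tilde u_k,\tilde f_k$ are independent standard complex Gaussians. Equivalently, conditioning on $\mathbf{h}_f$ gives $G=\left\|\mathbf{B}\mathbf{h}_f\right\|^2\left|w\right|^2$ with $w\sim\mathcal{CN}(0,1)$, since $\mathbf{h}_u^H\mathbf{v}$ for fixed $\mathbf{v}=\mathbf{B}\mathbf{h}_f$ is distributed as $\mathcal{CN}(0,\|\mathbf{v}\|^2)$.

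The core step is then to identify the generalized chi-squared structure. The quadratic form $\left\|\mathbf{B}\mathbf{h}_f\right\|^2=\mathbf{h}_f^H\mathbf{B}^H\mathbf{B}\mathbf{h}_f=\sum_k\sigma_k^2\left|\tilde f_k\right|^2$ is a weighted sum of independent unit-mean exponential (i.e.\ scaled $\chi^2_2$) variables, whose weights $\sigma_k^2$ are the nonzero eigenvalues of $\mathbf{B}^H\mathbf{B}$ and thus depend on the spectrum of $\mathbf{J}$ over the activated index set. A weighted combination of chi-squared components with distinct, correlation-dependent weights is by definition a generalized chi-squared random variable, and carrying the independent exponential factor $\left|w\right|^2$ through the resulting mixture preserves this family. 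Collecting the weights $\{\sigma_k^2\}$ from the spectrum of the restricted correlation matrix therefore yields the claimed characterization and, simultaneously, the moments needed for the subsequent moment-matching approximation.

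The main obstacle is the bilinear, rather than purely quadratic, nature of $H_\mathrm{eq}$: because $G$ is quartic in the underlying Gaussian entries, the classical result that a Hermitian quadratic form in a single Gaussian vector is generalized chi-squared does not apply verbatim, and the argument must instead route through the conditional quadratic form together with the spectral decomposition of $\mathbf{B}^H\mathbf{B}$. The correlation matrix $\mathbf{J}$ is the source of the remaining difficulty, since its off-diagonal entries couple the activated elements and render the weights $\sigma_k^2$ nontrivial functions of the Jakes' kernel and the selected index set; handling this coupling, rather than the Gaussianity itself, is where the real work lies.
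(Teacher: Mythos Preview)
Your route is essentially the paper's: collapse the deterministic factors into one matrix (the paper calls it $\mathbf{A}$, you call it $\mathbf{B}$) and then condition on one of the two Gaussian vectors to expose a Hermitian quadratic form in the other. The paper conditions on $\mathbf{h}_u$, writes $G=\mathbf{h}_f^H(\mathbf{A}^H\mathbf{h}_u\mathbf{h}_u^H\mathbf{A})\mathbf{h}_f$ with a rank-one PSD coefficient matrix, and invokes the standard result for Hermitian quadratic forms in a complex Gaussian vector; your SVD diagonalization reaches the same representation $G=\left\|\mathbf{A}\mathbf{h}_f\right\|^2\left|w\right|^2$ from the other side. So at the structural level the two arguments coincide, and your extra SVD step is a cosmetic refinement rather than a different idea.

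The gap is your closure step. You assert that multiplying the generalized-$\chi^2$ variable $\sum_k\sigma_k^2|\tilde f_k|^2$ by the independent exponential $|w|^2$ ``preserves this family,'' but under the usual definition---a \emph{fixed} linear combination of independent chi-squared components---it does not: already the product of two independent unit exponentials has a $K_0$-Bessel density, not a weighted chi-squared one. The paper avoids this by leaving the single eigenvalue $\zeta_1=\|\mathbf{A}^H\mathbf{h}_u\|^2$ random and citing the quadratic-form result as is, which makes the statement hold conditionally on $\mathbf{h}_u$ while remaining silent about the unconditional law. Your diagnosis that the bilinear, quartic-in-Gaussians structure is ``the main obstacle'' is exactly right; do not dissolve it with a closure claim that fails. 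If you want to match the paper, stop at the conditional quadratic form and cite the classical result; if you want to go further, you must either broaden the definition of generalized chi-squared to admit random weights or characterize the product law explicitly.
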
 

\begin{proof}
We commence with the definition of $\mathbf{A} \triangleq \mathbf{J}^{\frac{1}{2}}\mathbf{S}^T_{M_o}\mathbf{\Phi}\mathbf{S}_{M_o}\mathbf{J}^{\frac{1}{2}}$, yielding the compact formulation $H_\mathrm{eq} = \mathbf{h}_u^H\mathbf{A}\mathbf{h}_f$. Next, we define $\mathbf{h}_f$'s linear transformation $\mathbf{g} \triangleq \mathbf{A}\mathbf{h}_f\in\mathbb{C}^{M\times 1}$, for which it holds $\mathbf{g}\sim\mathcal{CN}\left(0,\mathbf{A}\mathbf{A}^H\right)$, since $\mathbf{A}$ is deterministic for a given optimized configuration of phase shifts and element selection. Consequently, since $\mathbf{h}_u$ and $\mathbf{g}$ are independent, $H_\mathrm{eq}$ becomes a quadratic form in complex Gaussian RVs. Then, by defining the vector $\mathbf{h}^T\triangleq\left[\mathbf{h}_u \mathbf{h}_f\right]\sim\mathcal{CN}\left(0,\mathbf{I}_{2M}\right)$, we can write
$
G = \left|\mathbf{h}^H\begin{bmatrix}
	\mathbf{0} & \mathbf{I} \\
	\mathbf{0} & \mathbf{0} 
\end{bmatrix}\mathbf{h}\right|^2.
$ 
Alternatively, using the definition $\mathbf{B}\triangleq\mathbf{A}^H\mathbf{h}_u\mathbf{h}^H_u\mathbf{A}$, yields:
\begin{align}
G = \mathbf{h}_f^H\mathbf{B}\mathbf{h}_f. \label{eq-heq}
\end{align}
Consequently, given that $\mathbf{h}_f\sim\mathcal{CN}\left(0,\mathbf{I}_M\right)$ and $\mathbf{B}\succeq 0$, i.e., rank-one positive semi-definite (PSD), it can be concluded that $G$ is a quadratic form in a complex Gaussian vector having with a Hermitian PSD coefficient matrix. It is known from~\cite{am1992} that a Hermitian quadratic form in a complex Gaussian random vector of the form $Q=\mathbf{w}^H\mathbf{H}\mathbf{w}$ with $\mathbf{w\sim\mathcal{CN}\left(0,\mathbf{I}\right)}$ and $\mathbf{H}\succeq 0$ follows a generalized-$\chi^2$ distribution, expressible as a weighted sum of independent exponential RVs. Hence, by comparing $Q$ with the equivalent channel gain $G$ in \eqref{eq-heq}, $H_\mathrm{eq}$ is generalized-$\chi^2$-distributed, i.e., 
$
H_\mathrm{eq} \sim \sum_{l=1}^r \zeta_l\left|\omega_l\right|^2
$, 
where $\zeta_l$'s are the non-zero eigenvalues of $\mathbf{B}$ and $w_l\sim\mathcal{CN}\left(0,1\right)$ $\forall l=1,2,\ldots,r$.
\end{proof}

\begin{Proposition}\label{pro-1}
The distribution of the equivalent channel gain $G$ can be accurately approximated by the Gamma distribution via moment matching method, i.e., its CDF and PDF are given respectively by:
\begin{align}
F_G\left(g\right) = \frac{1}{\Gamma\left(k\right)}\Upsilon\left(k,\frac{g}{\theta}\right)\label{eq-cdf}
\end{align}
and 
\begin{align}
f_G\left(g\right) = \frac{1}{\theta^k\Gamma\left(k\right)}g^{k-1}\mathrm{e}^{-g/\theta},\label{eq-pdf}
\end{align}
where $k =\frac{\left(\mathrm{tr}\left(\widetilde{\mathbf{J}}^2\right)\right)^2}{\mathrm{tr}\left(\widetilde{\mathbf{J}}^4\right)}$, $\theta = \frac{\mathrm{tr}\left(\widetilde{\mathbf{J}}^4\right)}{\mathrm{tr}\left(\widetilde{\mathbf{J}}^2\right)}$, and $\widetilde{\mathbf{J}}=\mathbf{S}_{M_o} \mathbf{J} \mathbf{S}_{M_o}^T$.
\end{Proposition}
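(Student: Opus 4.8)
The plan is to exploit the fact that a Gamma law is fixed by exactly two degrees of freedom, its shape $k$ and scale $\theta$, so that moment matching amounts to equating the first two moments of the Gamma with those of the true gain $G$. Since a Gamma$(k,\theta)$ variable satisfies $\mathbb{E}=k\theta$ and $\mathrm{Var}=k\theta^2$, the whole argument collapses to two tasks: first, expressing $\mathbb{E}[G]$ and $\mathrm{Var}(G)$ in closed form through $\widetilde{\mathbf{J}}$; and second, inverting the pair $k\theta=\mathbb{E}[G]$, $k\theta^2=\mathrm{Var}(G)$ for $k$ and $\theta$.

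To carry out the first task, I would build on the generalized-$\chi^2$ description of Lemma~1, where $G=\mathbf{h}_f^H\mathbf{B}\mathbf{h}_f$ with $\mathbf{B}=\mathbf{A}^H\mathbf{h}_u\mathbf{h}_u^H\mathbf{A}$ as in \eqref{eq-heq}. Because $\mathbf{h}_u\sim\mathcal{CN}(0,\mathbf{I}_M)$ is independent of $\mathbf{h}_f$, I would pass to an effective coefficient matrix by replacing the rank-one factor with its mean $\mathbb{E}[\mathbf{h}_u\mathbf{h}_u^H]=\mathbf{I}_M$, which reduces $\mathbf{B}$ to the deterministic $\mathbf{A}^H\mathbf{A}$ and casts $G$ as a weighted sum of i.i.d. unit-mean exponentials, $G\sim\sum_l\mu_l|\omega_l|^2$, with $\mu_l$ the nonzero eigenvalues of $\mathbf{A}^H\mathbf{A}$. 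For such a sum one has the elementary identities $\mathbb{E}[G]=\sum_l\mu_l=\mathrm{tr}(\mathbf{A}^H\mathbf{A})$ and $\mathrm{Var}(G)=\sum_l\mu_l^2=\mathrm{tr}\big((\mathbf{A}^H\mathbf{A})^2\big)$, so everything now hinges on evaluating these two traces.

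This evaluation is the step I expect to be the main obstacle, since $\mathbf{A}^H\mathbf{A}$ lives in the full $M\times M$ space and must be collapsed onto the $M_o\times M_o$ matrix $\widetilde{\mathbf{J}}$. I would use three ingredients: the invariance of the nonzero spectrum under cyclic reordering of a product; the definition $\widetilde{\mathbf{J}}=\mathbf{S}_{M_o}\mathbf{J}\mathbf{S}_{M_o}^T$ together with the orthogonality of $\mathbf{S}_{M_o}$ established in the system model; and the unitarity of the diagonal phase matrix $\mathbf{\Phi}$, whose contribution can be absorbed (equivalently, set to $\mathbf{I}_{M_o}$) without altering the eigenvalue moments, consistent with the remark that the reflection operation does not reshape the gain distribution. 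Cycling $\mathbf{A}^H\mathbf{A}=\mathbf{J}^{1/2}\mathbf{S}_{M_o}^T\mathbf{\Phi}^H\widetilde{\mathbf{J}}\mathbf{\Phi}\mathbf{S}_{M_o}\mathbf{J}^{1/2}$ and repeatedly absorbing $\mathbf{S}_{M_o}\mathbf{J}\mathbf{S}_{M_o}^T$ into $\widetilde{\mathbf{J}}$ shows that the nonzero eigenvalues of $\mathbf{A}^H\mathbf{A}$ are precisely the squares $\lambda_l^2$ of the eigenvalues $\lambda_l$ of $\widetilde{\mathbf{J}}$. Consequently $\mathbb{E}[G]=\sum_l\lambda_l^2=\mathrm{tr}(\widetilde{\mathbf{J}}^2)$ and $\mathrm{Var}(G)=\sum_l\lambda_l^4=\mathrm{tr}(\widetilde{\mathbf{J}}^4)$.

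Finally, substituting these into the matching conditions and dividing $k\theta^2=\mathrm{tr}(\widetilde{\mathbf{J}}^4)$ by $k\theta=\mathrm{tr}(\widetilde{\mathbf{J}}^2)$ isolates $\theta=\mathrm{tr}(\widetilde{\mathbf{J}}^4)/\mathrm{tr}(\widetilde{\mathbf{J}}^2)$, whence $k=(\mathrm{tr}(\widetilde{\mathbf{J}}^2))^2/\mathrm{tr}(\widetilde{\mathbf{J}}^4)$, exactly as claimed. Expressions \eqref{eq-pdf} and \eqref{eq-cdf} then follow immediately as the standard Gamma density and its lower-incomplete-gamma CDF evaluated at these $k$ and $\theta$, requiring no further computation. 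I would close by noting that the quality of the approximation rests on the Gamma family's ability to reproduce the shape of a weighted exponential sum from its first two moments, which both justifies the name ``moment matching'' and explains why two moments suffice.
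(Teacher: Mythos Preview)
Your overall plan---compute $\mathbb{E}[G]$ and $\mathrm{Var}(G)$, match them to $k\theta$ and $k\theta^2$, then reduce $\mathrm{tr}(\mathbf{A}^H\mathbf{A})$ and $\mathrm{tr}((\mathbf{A}^H\mathbf{A})^2)$ to $\mathrm{tr}(\widetilde{\mathbf{J}}^2)$ and $\mathrm{tr}(\widetilde{\mathbf{J}}^4)$ via trace cyclicity, the orthonormality of $\mathbf{S}_{M_o}$, and the unit-modulus structure of $\mathbf{\Phi}$---mirrors the paper's route, and your trace-simplification argument is essentially the one the paper uses.

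The gap is in how you obtain the variance. You propose to ``replace the rank-one factor $\mathbf{h}_u\mathbf{h}_u^H$ with its mean $\mathbf{I}_M$,'' thereby treating $G$ as $\mathbf{h}_f^H(\mathbf{A}^H\mathbf{A})\mathbf{h}_f$ with a \emph{deterministic} kernel, and then reading off $\mathrm{Var}(G)=\sum_l\mu_l^2=\mathrm{tr}((\mathbf{A}^H\mathbf{A})^2)$. That substitution is harmless for the first moment by linearity, but it is not a legitimate variance computation: what you are computing is $\mathrm{Var}(\tilde G)$ for the proxy $\tilde G=\mathbf{h}_f^H\,\mathbb{E}_{\mathbf{h}_u}[\mathbf{B}]\,\mathbf{h}_f$, not $\mathrm{Var}(G)$ for the true $G=\mathbf{h}_f^H\mathbf{B}\mathbf{h}_f$ in which $\mathbf{B}=\mathbf{A}^H\mathbf{h}_u\mathbf{h}_u^H\mathbf{A}$ is itself random. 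By the law of total variance, $\mathrm{Var}(G)=\mathbb{E}[\mathrm{Var}(G\mid\mathbf{h}_u)]+\mathrm{Var}(\mathbb{E}[G\mid\mathbf{h}_u])$, and your ``freezing'' of $\mathbf{h}_u$ at its second moment discards the fluctuations of $\mathbf{h}_u$ that enter both terms. The paper instead computes $\mathbb{E}[G^2]$ directly as a fourth-order moment in the joint Gaussian vector $(\mathbf{h}_u,\mathbf{h}_f)$ and invokes Isserlis' (Wick's) theorem to carry out the pairings, obtaining $\mathbb{E}[G^2]=\mathrm{tr}^2(\mathbf{A}\mathbf{A}^H)+\mathrm{tr}((\mathbf{A}\mathbf{A}^H)^2)$ and hence $\mathrm{Var}(G)=\mathrm{tr}((\mathbf{A}\mathbf{A}^H)^2)$. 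To make your argument go through you need to take the expectation over both $\mathbf{h}_u$ and $\mathbf{h}_f$ in $\mathbb{E}[G^2]$ rather than averaging one side out first.
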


\begin{proof}
We begin by expanding $G$ as follows:
\begin{align}
G = \left|\mathbf{h}_u^H\mathbf{A}\mathbf{h}_f\right|^2= \mathbf{h}_u^H\mathbf{A}\mathbf{h}_f\mathbf{h}^H_f\mathbf{A}^H\mathbf{h}_u,
\end{align}
where $\mathbf{A}$ has been defined in Lemma 1. Then, the first moment of $G$ can be derived as:
\begin{align}
  \mathbb{E}\left[G\right]& = \mathbb{E}_{\mathbf{h}_u}\left[\mathbf{h}_u^H\mathbf{A}\mathbb{E}_{\mathbf{h}_f}\left[\mathbf{h}_f\mathbf{h}_f^H\right]\mathbf{A}^H\mathbf{h}_u\right] \\
&= \mathbb{E}_{\mathbf{h}_u}\left[\mathbf{h}_u^H\mathbf{A}\mathbf{A}^H\mathbf{h}_u\right]\overset{(a)}{=} \mathrm{tr}\left(\mathbf{A}\mathbf{A}^H\right), 
\end{align}
where $(a)$ follows from the fact tha $\mathbb{E}\left[ \mathbf{w}^H\mathbf{H}\mathbf{w}\right]=\mathrm{tr}\left(\mathbf{H}\right)$ for $\mathbf{w\sim\mathcal{CN}\left(0,\mathbf{I}\right)}$.

For the second moment of $G$, we have that:
\begin{multline}
\mathbb{E}\left[G^2\right]\\
=\mathbb{E}\left[\left(\mathbf{h}_u^H\mathbf{A}\mathbf{h}_f\right)\left(\mathbf{h}^H_f\mathbf{A}^H\mathbf{h}_u\right)\left(\mathbf{h}_u^H\mathbf{A}\mathbf{h}_f\right)\left(\mathbf{h}^H_f\mathbf{A}^H\mathbf{h}_u\right)\right]. 
\end{multline}
After some simplifications, and using Isserlis' (Wick's) theorem for complex Gaussian vectors, yields:
\begin{align}
\mathbb{E}\left[G^2\right] = \mathrm{tr}^2\left(\mathbf{A}\mathbf{A}^H\right)+\mathrm{tr}\left(\left(\mathbf{A}\mathbf{A}^H\right)^2\right), 
\end{align}
and as a result, the variance of $G$ is given by $\mathrm{var}\left(G\right) = \mathrm{tr}\left(\left(\mathbf{A}\mathbf{A}^H\right)^2\right)$.
Then, by applying the moment matching technique to the Gamma distribution with the mean of $k\theta$ and variance of  $k\theta^2$, in which $k$ and  $\theta$ are the shape and scale parameters, respectively, i.e., $G\sim\mathrm{Gamma}\left(k, \theta\right)$, results in:
\begin{align}\label{eq-kt1}
k = \frac{\left(\mathrm{tr}\left(\mathbf{A}\mathbf{A}^H\right)\right)^2}{\mathrm{tr}\left(\left(\mathbf{A}\mathbf{A}^H\right)^2\right)}~\mbox{and}~
\theta = \frac{\mathrm{tr}\left(\left(\mathbf{A}\mathbf{A}^H\right)^2\right)}{\mathrm{tr}\left(\mathbf{A}\mathbf{A}^H\right)}.
\end{align} 
To complete the proof, we need to further evaluate $\mathrm{tr}\left(\mathbf{A}\mathbf{A}^H\right)$ and $\mathrm{tr}\left(\left(\mathbf{A}\mathbf{A}^H\right)^2\right)$. Note first that:
\begin{align}
\mathbf{A}\mathbf{A}^H = \mathbf{J}^{\frac{1}{2}} \mathbf{S}_{M_o}^T \mathbf{\Phi} \mathbf{S}_{M_o} \mathbf{J} \mathbf{S}_{M_o}^T \mathbf{\Phi}^H \mathbf{S}_{M_o} \mathbf{J}^{\frac{1}{2}}. \label{eq-aah-updated}
\end{align}
Let us define the reduced-size matrix $\widetilde{\mathbf{J}}\triangleq\mathbf{S}_{M_o} \mathbf{J} \mathbf{S}_{M_o}^T\in\mathbb{C}^{M_o\times M_o}$, which is the principal submatrix of $\mathbf{J}$ containing rows and columns corresponding to the selected $M_o$ elements. We also define the reduced-size matrix $\tilde{\mathbf{\Phi}}=\mathbf{S}_{M_o}^T \mathbf{\Phi}\mathbf{S}_{M_o}\in\mathbb{C}^{M_o\times M_o}$, which is a diagonal matrix containing the selected phase shifts. Using the latter definitions, the trace of $\mathbf{AA}^H$ can simplified as follows:
\begin{align}
\mathrm{tr}\left(\mathbf{A}\mathbf{A}^H\right)
&= \mathrm{tr}\left(\mathbf{J}^{\frac{1}{2}} \mathbf{S}_{M_o}^T \mathbf{\Phi} \mathbf{S}_{M_o} \mathbf{J} \mathbf{S}_{M_o}^T \mathbf{\Phi}^H \mathbf{S}_{M_o} \mathbf{J}^{\frac{1}{2}}\right)\\
&\overset{(b)}{=} \mathrm{tr}\left(\tilde{\mathbf{\Phi}} \tilde{\mathbf{J}} \tilde{\mathbf{\Phi}}^H \tilde{\mathbf{J}} \right), \label{eq-trace1-step}
\end{align}
where $(b)$ follows from the cyclic property of the trace and the orthonormality of $\mathbf{S}_{M_o}$, i.e., $\mathbf{S}_{M_o}^T\mathbf{S}_{M_o}=\mathbf{I}_{M_o}$.

Since $\tilde{\mathbf{\Phi}}$ is a diagonal matrix with unit-modulus complex entries, i.e., phase rotations, 
$\tilde{\mathbf{\Phi}} \tilde{\mathbf{J}} \tilde{\mathbf{\Phi}}^H=\widetilde{\mathbf{J}}'$ holds with $|\widetilde{\mathbf{J}}'_{i,j}|=|\widetilde{\mathbf{J}}_{i,j}|$, yielding to the following expression:
\begin{align}
\mathrm{tr}\left(\mathbf{A}\mathbf{A}^H\right) = \mathrm{tr}\left(\tilde{\mathbf{J}}^2\right). \label{eq-trac1}
\end{align}

As for the $\mathrm{tr}\left(\left(\mathbf{A}\mathbf{A}^H\right)^2\right)$, using the expansion of $\mathbf{AA}^H$ from \eqref{eq-aah-updated}, then applying the cyclic property of trace, and simplifying with $\widetilde{\mathbf{J}}$ and $\tilde{\mathbf{\Phi}}$, results in:
\begin{align}
\mathrm{tr}\left((\mathbf{A}\mathbf{A}^H)^2\right) &= \mathrm{tr}\left((\tilde{\mathbf{\Phi}} \tilde{\mathbf{J}} \tilde{\mathbf{\Phi}}^H)^2 \tilde{\mathbf{J}}^2\right). 
\end{align}
Similarly, since $\tilde{\mathbf{\Phi}}$ is diagonal and unit-modulus, the matrix similarity transformation preserves the Frobenius norm. Hence,
\begin{align}
\mathrm{tr}\left((\mathbf{A}\mathbf{A}^H)^2\right) = \mathrm{tr}\left(\tilde{\mathbf{J}}^4\right). \label{eq-trac2}
\end{align}
Eventually, by substituting \eqref{eq-trac1} and \eqref{eq-trac2} into \eqref{eq-kt1}, the desired result is obtained, which completes the proof. 
\end{proof}

To verify the accuracy of the theoretical analysis, we compare the theoretical and empirical results of the equivalent channel gain $G$ in terms of its PDF and CDF in Fig.~\ref{fig_dist} for a system configuration with $M_o=144$ activated elements out of $M=\left(M_x,M_z\right)=\left(20,20\right)$. As shown, the theoretical curves closely match the simulation results, confirming the validity of the derived expressions. 

\begin{figure}[htbp]
\centering
\subfigure[]{\includegraphics[width=0.5\linewidth]{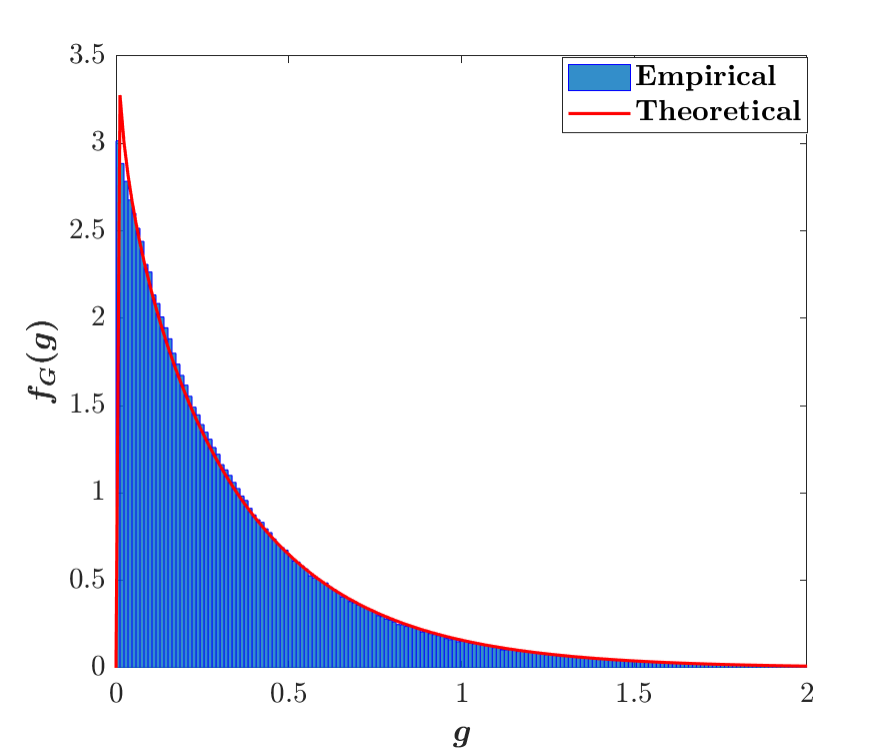}\label{fig:1a}
	}\hspace{-0.9cm}
	\hfill \vspace{-0.3cm}
\subfigure[]{\includegraphics[width=0.5\linewidth]{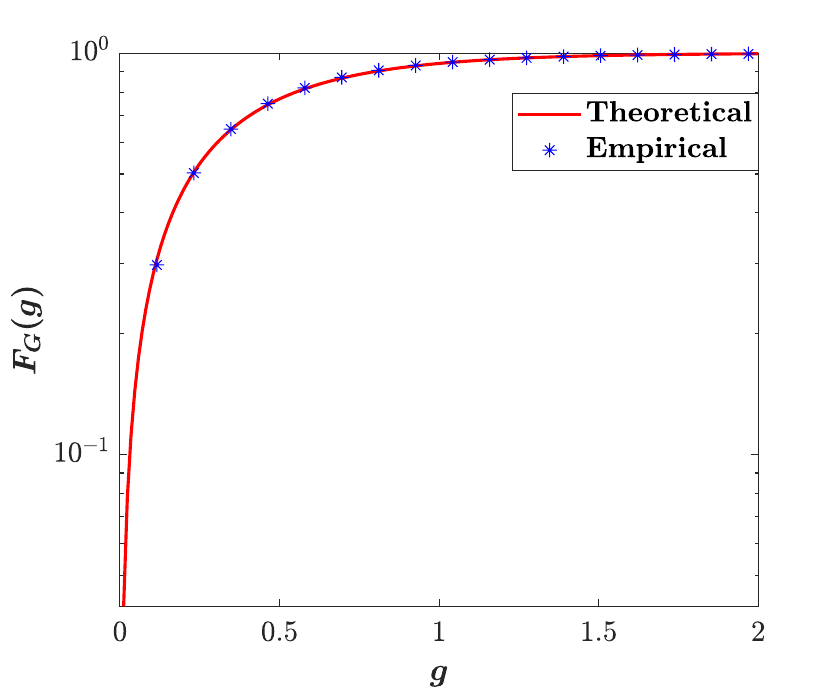}\label{fig:1b}
	}
	\hfill
\subfigure[]{\hspace{1cm}\includegraphics[width=0.65\linewidth]{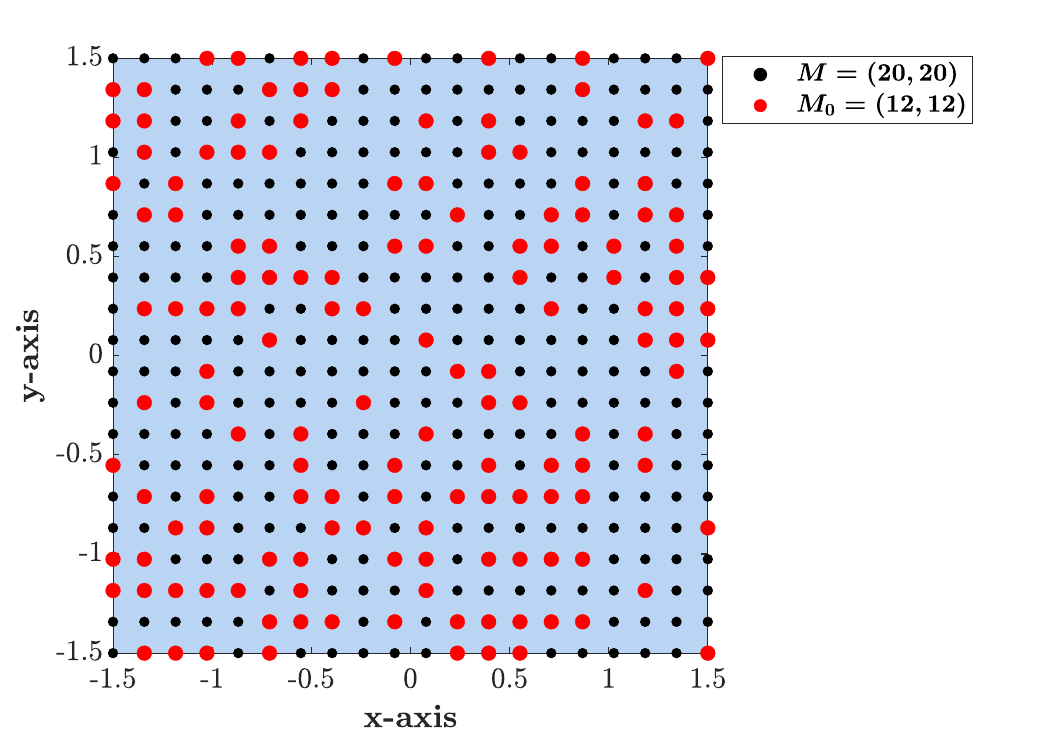}\label{fig_point}
	}\vspace{-2mm}
\caption{Illustration of (a) the PDF and (b) the CDF of the equivalent channel gain $G$, and (c) the system configuration for $M=\left(M_x,M_z\right)=\left(20,20\right)$.}\label{fig_dist}
\vspace{-6mm}
\end{figure}

\subsection{Outage Probability}
Outage probability is defined as the probability that the instantaneous channel capacity $C\triangleq\log_2\left(1+\gamma\right)$ falls below a target rate $R$, i.e., $P_o \triangleq \Pr\left(C\leq R\right)$. 

\begin{Proposition}
The outage probability for the proposed FRIS system is given by
\begin{align}
P_o  = \frac{1}{\Gamma\left(k\right)}\Upsilon\left(k,\frac{\overline{R}}{ \overline{\gamma}L_fL_u\theta}\right),
\end{align}
where $k$ and $\theta$ are defined in Proposition \ref{pro-1}.
\end{Proposition}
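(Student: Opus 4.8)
The plan is to reduce the outage probability directly to a single evaluation of the CDF of the equivalent channel gain $G$ derived in Proposition~\ref{pro-1}, exploiting the fact that $C=\log_2\left(1+\gamma\right)$ is a strictly increasing function of $\gamma$, and that $\gamma$ is in turn a strictly positive scaling of $G$. First, I would start from the definition $P_o=\Pr\left(C\leq R\right)=\Pr\left(\log_2\left(1+\gamma\right)\leq R\right)$. Since $\log_2\left(1+\cdot\right)$ is monotonically increasing, this event is equivalent to $\gamma\leq 2^R-1$, and I would introduce the shorthand $\overline{R}\triangleq 2^R-1$ for the SNR threshold so as to match the notation in the statement. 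Next, I would substitute the received SNR from \eqref{eq-snr}, namely $\gamma=\overline{\gamma}L_fL_uG$, and divide through by the strictly positive constant $\overline{\gamma}L_fL_u$ (positive because $\overline{\gamma}=P/\sigma^2>0$ and $L_f,L_u>0$ as path-loss gains), which preserves the inequality direction and yields $P_o=\Pr\left(G\leq \tfrac{\overline{R}}{\overline{\gamma}L_fL_u}\right)$. This is precisely the CDF $F_G$ of the equivalent channel gain evaluated at $g=\overline{R}/\left(\overline{\gamma}L_fL_u\right)$.

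Finally, I would invoke the closed-form CDF from Proposition~\ref{pro-1}, i.e.\ $F_G\left(g\right)=\frac{1}{\Gamma\left(k\right)}\Upsilon\!\left(k,\frac{g}{\theta}\right)$, and substitute the threshold to obtain $P_o=\frac{1}{\Gamma\left(k\right)}\Upsilon\!\left(k,\frac{\overline{R}}{\overline{\gamma}L_fL_u\theta}\right)$, with $k$ and $\theta$ as defined in Proposition~\ref{pro-1}, which is the claimed expression.

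There is no genuinely hard step here: the result follows from a monotone change of variables composed with the already-established Gamma approximation for the distribution of $G$. The only points requiring care are (i) making the threshold definition $\overline{R}=2^R-1$ explicit, since the statement uses this notation without introducing it, and (ii) verifying the sign of the scaling factor so that the event $\left\{\gamma\leq\overline{R}\right\}$ maps to $\left\{G\leq \overline{R}/\left(\overline{\gamma}L_fL_u\right)\right\}$ without flipping the inequality. The tightness of the final expression is inherited entirely from the accuracy of the moment-matched Gamma CDF established in Proposition~\ref{pro-1} and need not be re-examined in this proof.
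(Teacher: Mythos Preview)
Your proposal is correct and follows essentially the same route as the paper: both start from $P_o=\Pr\left(\log_2\left(1+\overline{\gamma}L_fL_uG\right)\leq R\right)$, invert the logarithm to obtain the event $\left\{G\leq\overline{R}/\left(\overline{\gamma}L_fL_u\right)\right\}$ with $\overline{R}\triangleq 2^R-1$, and then substitute the Gamma CDF from Proposition~\ref{pro-1}. Your added justification of monotonicity and positivity of the scaling constant is slightly more explicit than the paper's, but the argument is otherwise identical.
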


\begin{proof}
By inserting the SNR from \eqref{eq-snr} into the definition of outage probability, yields:
\begin{align}
P_o & = \Pr\left(\log_2\left(1+\overline{\gamma}L_fL_uG\right)\leq R\right)\\
& = \Pr\left(G\leq \frac{\overline{R}}{\overline{\gamma}L_fL_u}\right)=F_G\left(\frac{\overline{R}}{\overline{\gamma}L_fL_u}\right),\label{eq-p1}
\end{align}
where $\overline{R}\triangleq 2^{R}-1$. By inserting \eqref{eq-cdf} into \eqref{eq-p1}, the proof is completed. 
\end{proof}

\begin{corollary}
The asymptotic expression of the outage probability in the high-SNR regime, i.e., when $\gamma\gg 1$, is given by $P_o^\infty = \frac{1}{k\Gamma\left(k\right)}\left(\frac{\overline{R}}{ \overline{\gamma}L_fL_u\theta}\right)^k$.
\end{corollary}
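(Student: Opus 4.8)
The plan is to start from the exact outage-probability expression established in the preceding Proposition, namely $P_o = \frac{1}{\Gamma\left(k\right)}\Upsilon\left(k,\frac{\overline{R}}{\overline{\gamma}L_fL_u\theta}\right)$, and to isolate its dominant behaviour as the transmit SNR $\overline{\gamma}$ grows without bound. The central observation is that in the high-SNR regime the argument of the lower incomplete Gamma function, which I denote $x\triangleq\frac{\overline{R}}{\overline{\gamma}L_fL_u\theta}$, tends to zero. Hence the asymptotic analysis reduces entirely to characterising $\Upsilon\left(k,x\right)$ as $x\to 0^+$.

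First I would invoke the Maclaurin-type series expansion of the lower incomplete Gamma function,
\[
\Upsilon\left(k,x\right) = x^k\sum_{n=0}^{\infty}\frac{\left(-1\right)^n x^n}{n!\left(k+n\right)} = \frac{x^k}{k}-\frac{x^{k+1}}{k+1}+\cdots,
\]
which converges for all $x>0$. Retaining only the leading term as $x\to 0^+$ yields $\Upsilon\left(k,x\right)\approx x^k/k$, because every subsequent summand scales as $x^{k+n}$ with $n\geq 1$ and therefore vanishes strictly faster than the leading one.

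Substituting this leading-order approximation back into the exact outage expression gives $P_o\approx\frac{1}{k\Gamma\left(k\right)}x^k$, which is precisely the claimed $P_o^\infty$ once $x=\frac{\overline{R}}{\overline{\gamma}L_fL_u\theta}$ is reinstated. I do not anticipate a genuine obstacle here, since the argument rests entirely on a standard small-argument expansion; the only point requiring minor care is to confirm that the neglected terms are $o\left(x^k\right)$ relative to the leading one, which is immediate from their $x^{k+n}$ scaling with $n\geq 1$. As a useful by-product, this exposes the diversity order of the system to be $k$, because $P_o^\infty$ decays as $\overline{\gamma}^{-k}$.
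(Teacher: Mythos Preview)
Your argument is correct and matches the paper's approach: both rely on the small-argument behaviour $\Upsilon(k,x)\approx x^{k}/k$ of the lower incomplete Gamma function, which is exactly what underlies the paper's terse claim that $F_G^\infty(g)\simeq\frac{1}{k\Gamma(k)}\left(\frac{g}{\theta}\right)^k$. Your version simply makes the expansion explicit and checks that the higher-order terms are $o(x^{k})$, which is a welcome clarification rather than a different method.
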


\begin{proof}
In the high SNR regime, the CDF in \eqref{eq-cdf} can be rewritten as $F^\infty_G\left(g\right) \simeq \frac{1}{k\Gamma\left(k\right)}\left(\frac{g}{\theta}\right)^k$. Now, by substituting $F^\infty_G\left(g\right)$ into the outage probability definition, $P_o^\infty$ is obtained. 
\end{proof}

\subsection{Ergodic Capacity}
The ergodic capacity of the considered FRIS-assisted communication system is defined as follows: 
\begin{align}
\overline{C} \triangleq \mathbb{E}\left[C\right]=\int_0^\infty \log_2\left(1+\overline{\gamma}L_fL_uG\right)f_G\left(g\right)\mathrm{d}g,\label{eq-ecdef}
\end{align}
where $f_G\left(g\right)$ has been defined in \eqref{eq-pdf}. By substituting \eqref{eq-pdf} into \eqref{eq-ecdef}, a closed-form expression for the ergodic capacity can be derived in terms of the hypergeometric/Meijer's G-functions~\cite{Meijer}. However, to enhance the analysis further, we apply Jensen's inequality to derive a tight upper bound for $\overline{C}$. 

\begin{Proposition}
The ergodic capacity for the proposed FRIS-assisted communication system is given by 
\begin{align}
\overline{C} \approx \log_2\left(1+\overline{\gamma}L_f L_u \mathrm{tr}\left(\widetilde{\mathbf{J}}^2\right)\right).\label{eq-ec}
\end{align}
\end{Proposition}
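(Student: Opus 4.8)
The plan is to invoke Jensen's inequality, exploiting the concavity of the logarithm, and then to recycle the first-moment computation already carried out in the proof of Proposition~\ref{pro-1}. The starting point is the definition in \eqref{eq-ecdef}, namely $\overline{C} = \mathbb{E}\left[\log_2\left(1+\overline{\gamma}L_f L_u G\right)\right]$, where the expectation is over the random channel gain $G$.

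First I would note that the map $x \mapsto \log_2\left(1+\overline{\gamma}L_f L_u\, x\right)$ is concave on $x\geq 0$, since $\overline{\gamma}L_f L_u$ is a nonnegative deterministic constant and $\log_2(1+\cdot)$ is concave. Applying Jensen's inequality therefore yields the upper bound
\begin{align}
\overline{C} \leq \log_2\left(1+\overline{\gamma}L_f L_u\, \mathbb{E}\left[G\right]\right).
\end{align}
The remaining task is simply to substitute the mean $\mathbb{E}\left[G\right]$, which has in effect already been established. Indeed, under the Gamma approximation of Proposition~\ref{pro-1}, the mean is $k\theta$, and from the definitions of $k$ and $\theta$ in \eqref{eq-kt1} one has $k\theta = \mathrm{tr}\left(\mathbf{A}\mathbf{A}^H\right)$, which the trace simplification in \eqref{eq-trac1} identifies with $\mathrm{tr}\left(\widetilde{\mathbf{J}}^2\right)$. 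Equivalently, this matches the exact first moment computed directly as $\mathbb{E}\left[G\right]=\mathrm{tr}\left(\mathbf{A}\mathbf{A}^H\right)$ in the proof of Proposition~\ref{pro-1}, so no new distributional assumption is needed for the mean itself. Inserting this value gives exactly \eqref{eq-ec}.

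This argument is essentially routine, so I do not anticipate a genuine obstacle; the only subtlety worth flagging is the claim of \emph{tightness}. Jensen's inequality alone guarantees only an upper bound, and to justify calling it ``tight'' I would appeal to the concentration of $G$: as the number of activated elements $M_o$ grows, the coefficient of variation of $G$ (governed by $\mathrm{var}(G)/(\mathbb{E}[G])^2 = 1/k$) becomes small, so $G$ clusters near its mean and the Jensen gap vanishes. A brief remark to this effect, together with the numerical corroboration reported elsewhere in the paper, would suffice to support the approximation symbol in \eqref{eq-ec}.
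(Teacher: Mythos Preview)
Your proposal is correct and follows essentially the same approach as the paper: apply Jensen's inequality to the concave map $x\mapsto\log_2(1+\overline{\gamma}L_fL_u\,x)$, then substitute $\mathbb{E}[G]=\mathrm{tr}(\mathbf{A}\mathbf{A}^H)=\mathrm{tr}(\widetilde{\mathbf{J}}^2)$ from the proof of Proposition~\ref{pro-1}. Your added remark on tightness via the coefficient of variation $1/k$ is a nice complement that the paper does not make explicit.
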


\begin{proof}
By applying Jensen's inequality into \eqref{eq-ecdef}, we have
\begin{align}
\overline{C}&=\mathbb{E}\left[\log_2\left(1+\overline{\gamma}L_fL_uG\right)\right]\\
&\leq \log_2\left(1+\overline{\gamma}L_fL_u\mathbb{E}\left[G\right]\right)\\
&= \log_2\left(1+\overline{\gamma}L_fL_u\mathrm{tr}\left(\mathbf{A}\mathbf{A}^H\right)\right).\label{eq-p2}
\end{align}
By inserting $\mathrm{tr}\left(\mathbf{A}\mathbf{A}^H\right)$ from \eqref{eq-trac1} into \eqref{eq-p2}, yields\eqref{eq-ec} which completes the proof.
\end{proof}

\begin{corollary}
The asymptotic expression of the ergodic capacity in the high-SNR regime, i.e., $\gamma\gg 1$, is given by $\overline{C}^\mathrm{\infty} \approx \log_2\left(\overline{\gamma}L_f L_u \mathrm{tr}\left(\widetilde{\mathbf{J}}^2\right)\right)$.
\end{corollary}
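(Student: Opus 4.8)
The plan is to derive the high-SNR asymptote of the ergodic capacity upper bound established in Proposition~3, namely $\overline{C} \approx \log_2\left(1+\overline{\gamma}L_f L_u \mathrm{tr}\left(\widetilde{\mathbf{J}}^2\right)\right)$. The corollary simply drops the additive unity inside the logarithm, so the proof amounts to a limiting argument valid when the effective SNR is large. First I would start from the bound \eqref{eq-ec} and observe that the high-SNR regime $\gamma \gg 1$ is captured by letting the transmit SNR $\overline{\gamma} \to \infty$, which forces the argument $\overline{\gamma}L_f L_u \mathrm{tr}\left(\widetilde{\mathbf{J}}^2\right)$ to grow without bound, since $L_f$, $L_u$, and $\mathrm{tr}\left(\widetilde{\mathbf{J}}^2\right)$ are strictly positive constants determined by the path-loss and the correlation structure.

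The single key step is then to approximate $1 + \overline{\gamma}L_f L_u \mathrm{tr}\left(\widetilde{\mathbf{J}}^2\right) \approx \overline{\gamma}L_f L_u \mathrm{tr}\left(\widetilde{\mathbf{J}}^2\right)$ whenever $\overline{\gamma}L_f L_u \mathrm{tr}\left(\widetilde{\mathbf{J}}^2\right) \gg 1$, since the constant term becomes negligible relative to the dominant linear-in-$\overline{\gamma}$ contribution. Substituting this into the logarithm yields $\overline{C}^{\infty} \approx \log_2\left(\overline{\gamma}L_f L_u \mathrm{tr}\left(\widetilde{\mathbf{J}}^2\right)\right)$, which is precisely the claimed asymptotic expression. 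This also transparently exposes the slope: rewriting as $\log_2(\overline{\gamma}) + \log_2\left(L_f L_u \mathrm{tr}\left(\widetilde{\mathbf{J}}^2\right)\right)$ shows the pre-log (multiplexing) factor is unity, with the correlation trace $\mathrm{tr}\left(\widetilde{\mathbf{J}}^2\right)$ acting purely as a power offset.

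There is essentially no hard obstacle here, since the result is an elementary consequence of the previously established upper bound together with the standard $\log(1+x) \approx \log(x)$ simplification for large $x$. The only point that deserves a careful sentence is justifying that the high-SNR condition $\gamma \gg 1$ indeed implies $\overline{\gamma}L_f L_u \mathrm{tr}\left(\widetilde{\mathbf{J}}^2\right) \gg 1$; this follows because $\mathbb{E}[\gamma] = \overline{\gamma}L_f L_u \mathrm{tr}\left(\widetilde{\mathbf{J}}^2\right)$ by \eqref{eq-snr} and the first-moment computation in Proposition~1, so the mean SNR scales exactly like the logarithm's argument. Thus I would present the proof in two short lines: invoke \eqref{eq-ec}, then take the leading-order term of the logarithm for large argument to obtain $\overline{C}^{\infty}$.
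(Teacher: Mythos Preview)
Your proposal is correct and matches the paper's implicit argument: the paper states this corollary without proof, relying on the reader to see that it follows immediately from \eqref{eq-ec} by dropping the additive $1$ inside the logarithm when $\overline{\gamma}L_fL_u\,\mathrm{tr}(\widetilde{\mathbf{J}}^2)\gg 1$. Your two-line derivation and the accompanying justification that the high-SNR condition forces the logarithm's argument to be large are exactly what is needed.
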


\begin{figure*}[htbp]
\centering
\subfigure[]{\includegraphics[width=0.32\linewidth]{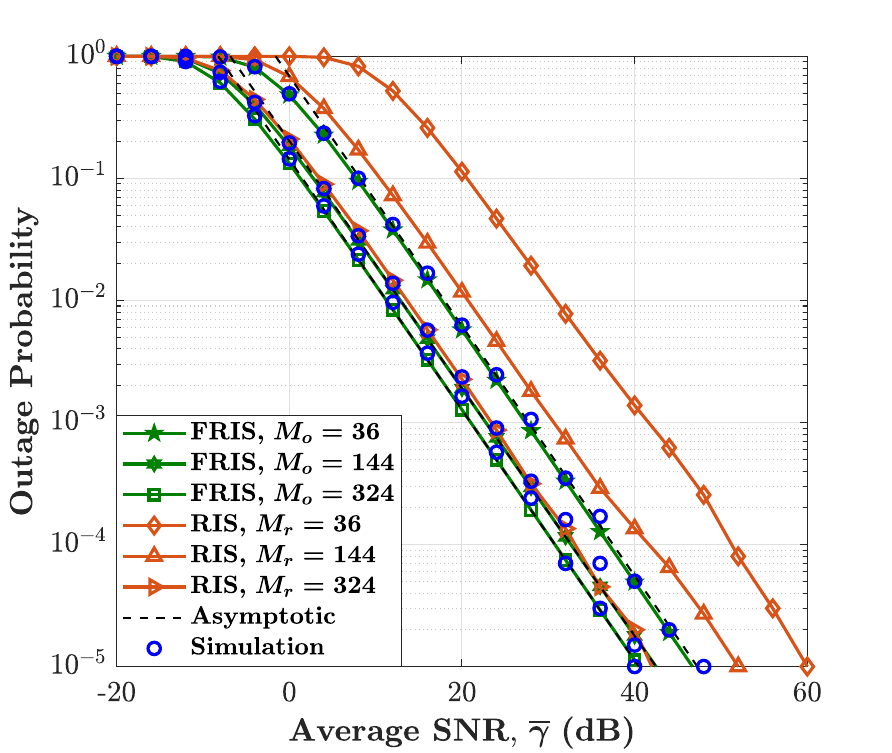}\label{fig_op}
	}\hspace{-2mm}
	\hfill
\subfigure[]{\includegraphics[width=0.32\linewidth]{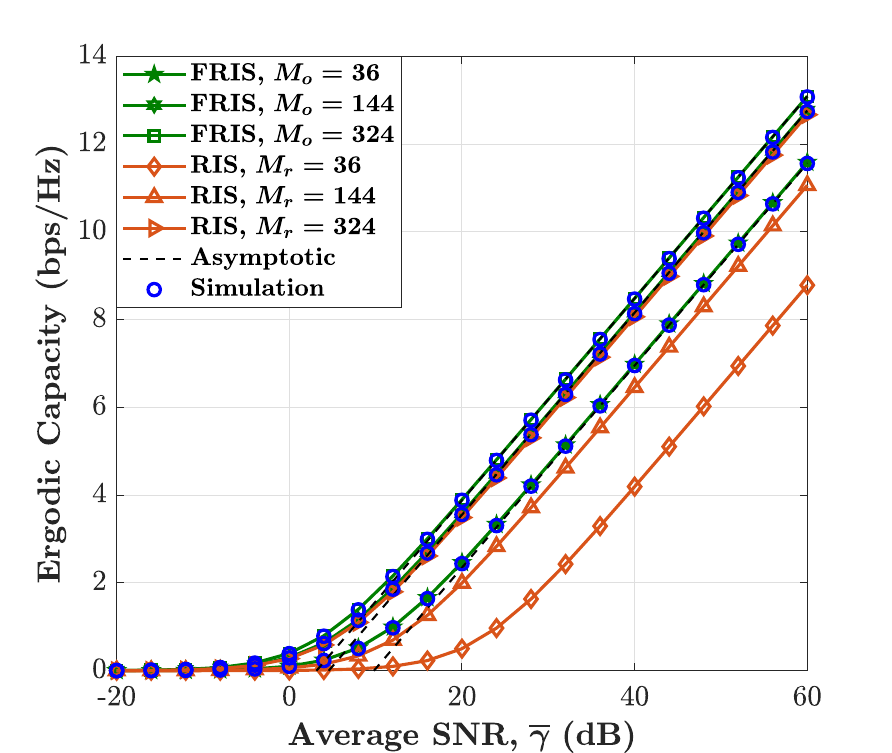}\label{fig_ec}
	}\hspace{-2mm}
	\hfill
\subfigure[]{\includegraphics[width=0.32\linewidth]{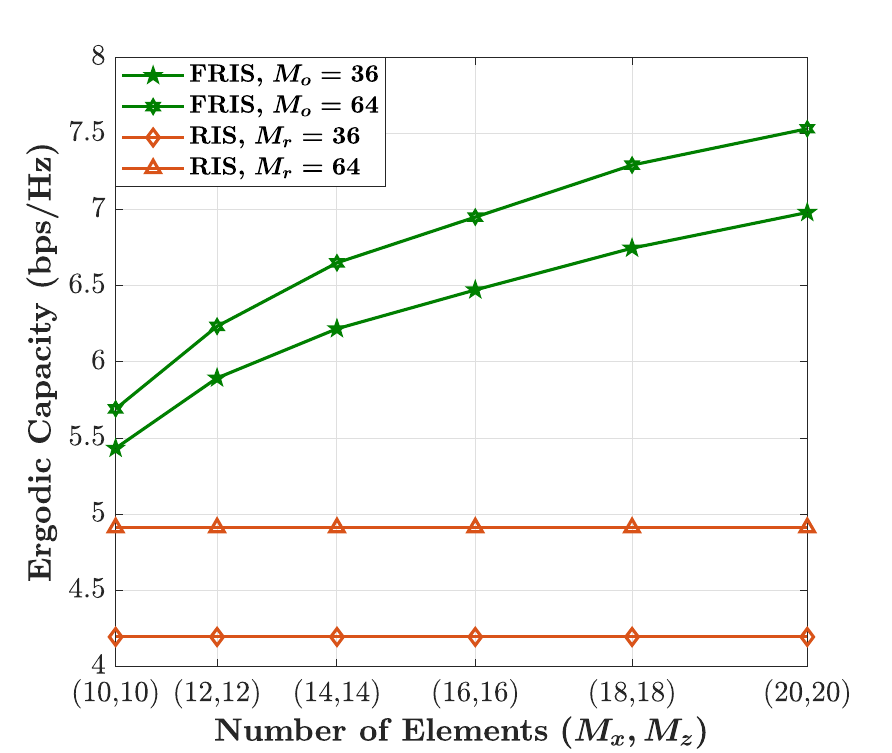}\label{fig_cm}
	}\vspace{-2mm}
\caption{(a) Outage probability performance versus the average SNR $\overline{\gamma}$ for different $M_o$, (b) ergodic capacity performance versus the average SNR $\overline{\gamma}$ for different $M_o$, and (c) ergodic capacity performance versus the number of elements $M=\left(M_x,M_z\right)$ for different $M_o$.}\vspace{-5mm}
\label{}
\end{figure*}

\vspace{-2mm}
\section{Numerical Results and Discussion}\label{sec-num}
In this section, we evaluate the validity of the derived analytical expressions for the outage probability and ergodic capacity of the proposed FRIS-assisted communication system by means of Monte-Carlo simulations. Unless otherwise specified, the simulation parameters are set to: $R = 0.1$, $\rho = 10$, $d_f = 20$ m, $d_u = 40$ m, $\alpha=2.1$, $\left(M_x,M_z\right) = (20,20)$, $W=3\lambda\times 3\lambda$, and a carrier frequency $f_c$ at $2.4$ GHz, hence, $\lambda=0.125$ m. We benchmark the performance against a conventional RIS system of the same size as the FRIS, where $M_r$ denotes the number of its elements.

Fig.~\ref{fig_op} shows the outage probability performance against the average SNR  $\overline{\gamma}$ for different $M_o$ in the proposed FRIS architecture. We compare the results with a conventional RIS with an equal number of reflecting elements, i.e., $M_r \in \{36, 144, 324\}$. The asymptotic analytical results are also included to validate the theoretical accuracy. It can be observed that the asymptotic expressions closely match the Monte-Carlo results in the high-SNR regime, confirming their validity. Moreover, FRIS consistently outperforms the conventional RIS counterpart across all considered values of $M_o$ and $M_r$. Specifically, for a given number of elements, FRIS achieves significantly lower outage probability values compared to the RIS, highlighting its efficiency in enhancing link reliability. 

For instance, at $\overline{\gamma} = 40$ dB with $M_o=M_r=36$, the outage probability achieved by FRIS is close to $10^{-4}$, while RIS only achieves an outage probability above $10^{-3}$. This superior performance with the FRIS is mainly attributed to its capability to dynamically configure the active elements over the aperture, thereby better exploiting spatial diversity. Unlike conventional RISs where elements are fixed and discrete, the FRIS allows for more flexible spatial placement and adaptation, which leads to more favorable propagation conditions and improved signal focusing at the MU. Moreover, as the number of active elements $M_o$ increases, the outage probability improves significantly at first, but tends to saturate beyond a certain point. This saturation behavior arises because the marginal gain from adding more elements diminishes in the high-SNR regime, where the system already benefits from rich spatial diversity. As a result, simply increasing the number of elements does not proportionally enhance performance. 

The ergodic capacity behavior against $\overline{\gamma}$ for both the proposed FRIS and the conventional RIS systems is illustrated in Fig.~\ref{fig_ec}. As shown, in the medium-to-high SNR regime, there is almost perfect agreement between the results from the derived analytical approximation and those from simulations. We also see that FRIS outperforms the conventional RIS across the entire SNR range, with this improvement becoming more pronounced as the SNR increases. This trend demonstrates the FRIS's superior capability in boosting the capacity. For instance, at $\overline{\gamma} = 40$ dB, the FRIS with $M_o = 16$ achieves an ergodic capacity around $11.8$ bps/Hz, while the RIS with the same number of elements only reaches around $8.8$ bps/Hz, i.e., being $34$\% smaller than that obtained with the FRIS. 

Moreover, the results show that increasing the number of elements consistently enhances the ergodic capacity for both architectures. However, as before, the rate of improvement diminishes with larger values of $M_o$ or $M_r$, indicating a saturation effect. This is due to the fact that the system eventually becomes unable to extract additional gains from dense spatial sampling, so further increase in spatial resolution provides marginal capacity benefits. Clearly, FRIS, thanks to its flexible spatial control, is better suited to exploit the available degrees of freedom, thereby improving capacity scaling.

Finally, Fig.~\ref{fig_cm} evaluates the ergodic capacity versus the total number $M=\left(M_x,M_z\right)$ of preset RIS elements. The comparison is made between the FRIS with a fixed number of activated elements $M_o \in \{36, 64\}$ and a conventional RIS with the same total number of reflecting elements $M_r \in \{36, 64\}$. It is evident that the ergodic capacity of FRIS increases consistently with $M$, even when $M_o$ is fixed. This performance gain stems from the FRIS's inherent flexibility, allowing the system to dynamically select the most favorable subset of elements from a denser surface grid. In contrast, the ergodic capacity with RIS remains constant regardless of the total surface resolution. This is because all RIS elements are fixed and passive, thereby, increasing the surface resolution without increasing $M_r$ offers no additional degrees of freedom or optimization capability.

\section{Conclusion}\label{sec-con}
This letter investigated the performance of FRIS-assisted wireless communication systems, presenting a statistical characterization of the equivalent end-to-end channel. Specifically, we derived closed-form expressions for its PDF and CDF using the moment-matching method, providing a fundamental understanding of the channel behavior with an FRIS. We further derived analytical expressions for the outage probability and a tight upper bound for the ergodic capacity, as well as their asymptotic trends in the high-SNR regime. Our numerical results showcased that FRIS significantly improves both reliability and spectral efficiency compared to RIS. These findings underscore the potential of FRIS as a scalable and adaptive solution for future wireless networks.

\end{document}